\newtheorem{theorem}{Theorem}[section]
\newtheorem{proposition}[theorem]{Proposition}
\newtheorem{lemma}[theorem]{Lemma}
\newcommand{\F}{\mathcal F}
\newcommand{\C}{\mathcal C}
\newcommand{\T}{\mathcal T}
\newcommand{\PP}{{\mathbb P}}
\newcommand{\EE}{{\mathbb E}}
\newcommand{\old}[1]{{}}
\title[Is testing a tree easier?]{Phylogenetic information complexity: Is testing a tree easier than finding it?}
\author{Mike Steel, Laszlo Sz{\'e}kely and Elchanan Mossel}
\thanks{We thank the Isaac Newton Institute (Phylogeny programme)  for supporting this work.  LS acknowledges the support of NSF DMS 0701111, NIH NIGMS 1 RO1, GM078991-01, Marie Curie Fellowship HUBI MTKD-CT-2006-042794. EM acknowledges the support of an Alfred Sloan fellowship in Mathematics, NSF grants
DMS-0528488, DMS-0548249 (CAREER), and  DOD ONR grant
N0014-07-1-05-06}
\address{Biomathematics Research Centre, Department of Mathematics and
 Statistics, University of Canterbury, Christchurch, New Zealand. \\
 Department of Mathematics, University of South
  Carolina \\ Columbia SC, USA.
  \\Statistics Department, UC Berkeley, Berkeley, California, USA}
\email{m.steel@math.canterbury.ac.nz, szekely@math.sc.edu, mossel@stat.berekeley.edu}
\subjclass{}
\date{9 July 2008}
\keywords{phylogenetic tree, information content, sequence length, reconstruction}
\begin{document}
\begin{abstract}
Phylogenetic trees describe the evolutionary history of a group of present-day species from a common ancestor.  These trees are typically reconstructed from aligned DNA sequence data. In this paper we analytically address the following question: is the amount of sequence data required to accurately reconstruct a tree significantly more than the amount required to test whether or not a candidate tree was the `true' tree?  By `significantly', we mean that the two quantities behave the same way as a function of the number of species being considered. We prove that,  for a certain type of model, the amount of information required is not significantly different; while for another type of model, the information required to test a tree is independent of the number of leaves, while that required to reconstruct it grows with this number. Our results combine probabilistic and combinatorial arguments.
\end{abstract}

\maketitle

Corresponding Author:

Mike Steel

Phone: +64-3-3667001, Ext. 7688

Fax: +64-3-3642587

Email: m.steel@math.canterbury.ac.nz

\newpage
\section{Introduction}

Phylogenetic trees are widely used in evolutionary biology to describe how species have evolved from a shared ancestral species.  In the last 25 years, aligned DNA sequence data and related sequences (amino acids, codons etc) have been widely used for reconstructing and analysing these trees \cite{fels, sem}.  Tree reconstruction methods usually assume that sequence sites evolve according to some Markov process. The question of how much data is required to reconstruct a phylogenetic tree has been considered by a number of biologists \cite{chu, lec, sai, wort}  and is topical, as it is not clear whether all trees for all taxa sets could be reconstructed accurately from the available data.

In earlier papers, we have analytically quantified the sequence length required for accurate tree reconstruction
when sites evolve i.i.d. under various Markov processes \cite{logs3, logs, cluster, mos1}. These bounds typically depend on the number of taxa and the properties of the tree -- in particular, how close the probability of a change of state (`substitution') on any edge is to $0$ or to its maximum possible value. It is the rate of growth in the sequence with the number of taxa that is of interest here. The growth rate in sequence length required for accurate tree reconstruction has a trivial lower bound growth of $\log(n)$  -- this comes from simply comparing the number of binary trees on $n$ leaves with the number of collections of $n$ sequences of given length.  What is perhaps surprising is that for certain finite-state Markov processes  this primitive rate of growth can be achieved for some models \cite{logs3}, given a bound on the substitution probabilities. This $\log(n)$ upper bound on sequence length also applies to a discrete infinite-state model (the `random cluster model') \cite{cluster}, given similar bounds on the substitution probabilities.  The $\log(n)$ behaviour for these two models changes to a polynomial dependence on $n$ when the probabilities of state change are allowed to exceed a certain critical value.

 In this paper, we address a quite different question: namely if one has both the data and a proposal for a `true' tree (i.e. the tree that produced the data under the model) we would like to test whether this tree is indeed the true tree, or whether some other tree must have produced the data. The answer provided must be correct with high probability.   This concept of
testing fits into the Popperian tradition -- we would like to be able to  refute the hypothesis that a particular tree produced the data, without necessarily exhibiting the tree that did.  In statistics, the theory of testing among a discrete set of hypotheses has a long history (see, for example, Wald's  paper from 1948 \cite{ber}). In contemporary phylogenetics, the concept of testing a tree is timely, as various `tree of life' projects begin to provide detailed, large candidate evolutionary trees, the question of using data to `test' any such  candidate tree arises.

In this paper, we ask whether the information (sequence length) required for these two tasks -- reconstructing versus testing -- is fundamentally the same, i.e. that it grows at the same rate as a function of the size of the taxon set.  Intuitively, testing should be `easier' (require less data) than reconstructing, since for testing, one has additional information, and one is simply asked to make a binary decision.  This suggested difference is somewhat analogous to the ``$P \neq NP$'' conjecture in computational complexity,  where any candidate solution for a problem in the class $NP$  can be readily verified or refuted (in polynomial time) but it is believed that finding such a solution is fundamentally harder (i.e. not always possible in polynomial time). Of course, in our setting, we are dealing with sequence length, not computing time, but the two problems have an  analogous flavour.

In the next section, we describe a general framework for discussing these issues, and we exhibit an (abstract) example where the sequence length required to test a discrete parameter is far less than the sequence length required to reconstruct it.  Turning to the phylogenetic setting in Section~\ref{phy}, we show that when sequence sites evolve i.i.d. according to a finite-state Markov process then
 testing requires sequence length growth of rate $\log(n)$ -- which is the same as reconstructing requires, given bounds on the substitution probabilities.   By contrast,  for a discrete infinite-state Markov process, the situation is quite different -- constant length sequences suffice to test, but order $\log(n)$ length sequences are required for reconstruction.  We conclude the paper with some brief comments.

\section{Testing versus reconstructing}

In this section, we describe definitions and properties of  testing and reconstructing in a general setting; we will specialize our approach to the phylogenetic setting in the following section.

Suppose $A=A_n$ and $U= U_n$ are finite sets, and that we have a random variable $X=X_{(a,\theta)}$ taking values in $U$ and whose distribution  depends on the discrete parameter $a\in A$, and perhaps some nuisance parameter $\theta$ taking values in a set $\Theta(a)$ (in the next section, $A$ will be a set of trees, $U$ will be a set of site patterns and the nuisance parameters will be edge lengths of the tree -- all these concepts will be described later).  We call $X_{(a, \theta)}$ a {\em parameterized random variable}, and when  the nuisance parameter is either absent or has been specified for each $a$ (so that the distribution of $X_{(a,\theta)}$ just depends on $a$), we refer to a {\em simply-parameterized random variable}.

Given a sequence  ${\bf u}=  (u_1, u_2, \ldots, u_k)$  of $k$ i.i.d. observations of $X_{(a, \theta)}$,
we would like to use ${\bf u}$  to identify the discrete parameter $a$ correctly with high probability.   This reconstruction task is always possible for sufficiently large values of $k$ provided a weak `identifiability' condition holds, namely that for all $a \in A$, and
$\theta \in \Theta(a)$ we have:
$$\inf_{a' \neq a, \theta' \in \Theta(a')} d((a,\theta), (a', \theta')) >0$$
(see \cite{inv3}),
where $d((a,\theta), (a', \theta'))$ denotes the $l_1$ distance between the probability distribution of the random variables $X_{(a, \theta)}$ and $X_{(a', \theta')}$.

The two tasks that we consider can be summarized, informally, as follows.

\begin{itemize}
\item[]  {\bf Reconstructing:} Given ${\bf u} \in U^k$, determine with high probability the value $a \in A$ that generated ${\bf u}$ (for some $\theta \in \Theta(a)$).
\item[] {\bf Testing:} Given a candidate value $a\in A$, as well as ${\bf u} \in U^k$, determine with high probability whether or not ${\bf u}$ was generated by $(a, \theta)$ (for some $\theta \in \Theta(a))$.
\end{itemize}

We are interested in determining and comparing the number of i.i.d. samples required to carry out these tasks. Clearly
testing is `easier'  than reconstructing (i.e. testing requires a smaller or equal value of $k$ than reconstructing for the same accuracy), since one can always test by reconstructing and comparing the reconstructed object with the candidate object. Thus we will be particularly interested in determining whether the value of $k$ grows at the same rate with $n$ for these two tasks.

In general a basic lower limit on $k$ is required for reconstructing, and a much weaker one for testing, which shows that testing can require asymptotically much shorter sequences.  Before describing these bounds (Proposition~\ref{two}), we formalize the concept of reconstructing and testing.

\subsection{Definitions: Reconstruction, testing and accuracy}

Throughout we will let $(X_{(a, \theta)}^1, \ldots X_{(a, \theta)}^k)$ denote a sequence of $k$ i.i.d. observations generated by $(a, \theta)$.

A {\em reconstruction method} $R$ is a random variable\footnote{Reconstruction is often viewed as a deterministic function, but in reality, most methods have to break ties and so allowing $R$ to be random allows this (and perhaps other generalities) -- also, we will always assume that reconstruction and generation are stochastically independent processes.} $R({\bf u})$ taking values in $A$ and which depends on
${\bf u} \in U^k$.
Then: $$\rho_{(a,\theta)} ^R:= \PP(R((X_{(a, \theta)}^1, \ldots X_{(a, \theta)}^k))= a)$$
is the probability that $R$ will correctly reconstruct $a$ from $k$ i.i.d. samples generated by $(a,\theta)$.
We say that $R$ has {\em (reconstruction) accuracy $1-\epsilon$ (for $k$ samples)} if, for all $a \in A$
 and $\theta \in \Theta(a)$, we have,
$$\rho_{(a, \theta)}^R \geq 1-\epsilon.$$

Note that, if we let  $p_{a, \theta}(u) = \PP(X_{(a,u)} = u)$, and for ${\bf u} \in U^k$ let\footnote{We will write $p_{a}({\bf u})$ in the case of a simply-parameterized random variable.}
$$p_{a, \theta}({\bf u}) := \prod_{i=1}^kp_{a, \theta}(u_i),
$$
then: $$\rho_{(a,\theta)}^R=\sum_{{\bf u} \in U^k} p_{(a,\theta)}({\bf u } ) \cdot \PP(R({\bf u}) =a).$$

A {\em testing process} $\psi$ is a collection of random variables $ \psi(a, {\bf u}): a \in A, {\bf u} \in U^k$ taking values in $ \{{\rm true}, {\rm false}\}$.  In the case where $\psi$ assigns `true' or `false' with probability $1$ (for each choice of $(a, {\bf u})$), we
say that $\psi$ is a {\em deterministic testing process}.

Let $\epsilon>0$.  We say that a testing process has {\em accuracy $1-\epsilon$ (for $k$ samples)} if for all $a \in A$, and $\theta\in \Theta(a)$, the following
two conditions hold:
\begin{equation}
\label{eps1}
\PP(\psi(a, (X^1_{(a,\theta)}, \ldots, X^k_{(a,\theta)})) = {\rm true}) \geq 1- \epsilon,
\end{equation}
and for any $b \neq a$, and $\theta' \in \Theta(b)$,
$$\PP(\psi(a, (X^1_{(b,\theta')}, \ldots, X^k_{(b,\theta')})) = {\rm false}) \geq 1-\epsilon.$$

In other words, $\psi$ returns `true' with probability at least $1-\epsilon$ when the discrete parameter $a \in A$ is
tested against the data it produced, and $\psi$ returns `false' with probability at least $1-\epsilon$
when any other particular element of $A$ is tested against the data. Note that any collection of random variables $X_{(a, \theta)}$ has a trivial testing process with accuracy $\frac{1}{2}$; namely for each $a \in A$ and ${\bf u} \in U$ let
$\psi(a, {\bf u}) = {\rm true}$ with probability $\frac{1}{2}$. To exclude such trivialities we will generally assume that $\epsilon<\frac{1}{2}$.

If Inequality (\ref{eps1}) is strengthened to:
$$\PP(\psi(a, (X^1_{(a,\theta)}, \ldots, X^k_{(a,\theta)})) = {\rm true}) =1$$
for all $a \in A$ and $\theta \in \Theta(a)$, we say that the testing process has {\em strong accuracy} $1-\epsilon$.

The following proposition shows that reconstructing in general can require considerably longer sequences than testing.
\bigskip

\begin{proposition}
\label{two}
\begin{itemize}
\mbox{}
\item[(i)]
Suppose a simply-parameterized random variable $X_a$ ($a \in A$) takes values in $U$ and has a reconstruction method with accuracy strictly greater than $\frac{1}{2}$ for $k$ samples.
Then $|A| \leq |U|^k$, and so $k \geq \frac{ \log(|A|)}{\log(|U|)}$, or, equivalently:
$$\log(|U|) \geq \frac{1}{k}\log(|A|).$$
\item[(ii)]
For any $\epsilon>0$ and $k=1$,  there exist sets $A, U$ and a simply-parametized random variable $X_{a}$ (for $a \in A$)  taking values in $U$, and a deterministic testing procedure $\psi$ that has strong accuracy of $1-\epsilon$, such that: $$\log(|U|) = O(\log(\log(|A|))).$$
\end{itemize}
\end{proposition}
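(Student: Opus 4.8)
The plan is to construct an explicit example where the discrete parameter set $A$ is exponentially large in $|U|$, yet a single sample suffices to test any candidate with strong accuracy $1-\epsilon$. The key observation is that testing only asks for a binary decision about one candidate at a time, so we do not need every pair of parameters to be distinguishable from a single observation — we only need each individual parameter $a$ to produce, with high probability, an observation that "certifies" membership in (or exclusion from) the class associated with $a$. This can be arranged by letting each $a$ correspond to a subset of a modestly-sized universe, with the distributions chosen so that the sample reveals enough about which subsets it is consistent with.

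Concretely, I would take $U$ to be (essentially) the set $\{0,1\}^m$ of binary strings of length $m$, so $|U| = 2^m$, and let $A$ be a family of subsets of $\{1,\dots,m\}$ — for instance, all subsets, giving $|A| = 2^m$, hence $\log|A| = m\log 2$ and $\log|U| = m \log 2$; that is only linear, not the doubly-logarithmic bound we want. To get the doubly-logarithmic relationship we instead let $A$ be something like the set of all $2^{2^t}$ subsets of a ground set of size $2^t$, while $U$ has size roughly $2^t$ or a small polynomial in $2^t$; then $\log|A| \approx 2^t$ and $\log\log|A| \approx t \approx \log|U|$, which is exactly $\log|U| = O(\log\log|A|)$. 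The parameterized random variable $X_a$ for a subset $a$ of the ground set should output, with probability $1-\epsilon$, an element that lies in $a$ (or a string encoding "I came from a set containing this element"), and the testing process $\psi(a,u)$ then returns \textrm{true} iff the observed $u$ is consistent with having been generated by $a$ (e.g. iff the element $u$ lies in $a$). For this to give strong accuracy we want: (1) when the data really was generated by $a$, the consistency check passes with probability $1$ (so $X_a$ must be supported entirely on elements of $a$); and (2) when the data was generated by some $b \neq a$, the check fails with probability $\geq 1-\epsilon$, which forces us to arrange that $X_b$ rarely lands in $a$ — i.e. the distributions must be "spread out" so that no single element is shared by too large a fraction of the mass of $X_b$ with the membership pattern of a different set.

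The cleanest way to meet requirement (2) is to let each $X_a$ be close to uniform on $a$ (or on some large canonical subset attached to $a$) and to choose $A$ to consist of sets that pairwise differ a lot, so that $\PP(X_b \in a) = |a \cap b|/|b|$ is small; but a subtlety is that we need this against \emph{every} $b \neq a$ simultaneously, and uniform-on-a-random-subset will typically have $|a\cap b|/|b| \approx 1/2$, which is not $\le \epsilon$. I therefore expect the main obstacle to be engineering the family and the distributions so that the "false positive" condition holds for all pairs while keeping $|U|$ doubly-logarithmically small in $|A|$. The fix is to not use plain membership: have $X_a$ output a \emph{pair} (ground-set element, plus a short verification tag that depends on $a$), or equivalently work with an error-correcting-code-like structure where each $a$ is an affine subspace / codeword and $X_a$ is uniform on a coset; then "consistency of $u$ with $a$" can be made a rigid linear-algebraic condition that a sample from $X_b$ satisfies only with exponentially small probability, while the number of cosets (hence $|A|$) is doubly exponential in the code's block length (hence in $\log|U|$). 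Once the construction is in place, verifying the two accuracy inequalities of the definition is a routine counting estimate, and $k=1$ is immediate since the construction is built around a single observation.

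Finally, I would double-check the parameter bookkeeping: we need $|A|$ large enough that $\log\log|A| = \Theta(\log|U|)$ rather than merely $\Omega$ or $o$, so the construction should be tuned so that $\log|U|$ and $\log\log|A|$ are within a constant factor; and we should note explicitly that this does not contradict part (i), since part (i) is about reconstruction — indeed in our example reconstructing $a$ from one sample is impossible because one element of the ground set is consistent with exponentially many sets $a$, so $k=1$ gives reconstruction accuracy far below $\frac12$, exactly as part (i) predicts given $|A| \gg |U|$. This contrast is the whole point of the proposition, so it is worth stating alongside the construction.
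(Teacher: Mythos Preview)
Your framework is exactly right and matches the paper's: take $A$ to be a family of subsets of a ground set $U$, let $X_a$ be uniform on $a$, and let $\psi(a,u)$ return \emph{true} iff $u\in a$. Strong accuracy then reduces to the purely combinatorial requirement $|a\cap b|/|b|\le\epsilon$ for all $a\neq b$, which you correctly isolate as the crux.

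The gap is in your proposed construction of the family. Your claim that ``the number of cosets is doubly exponential in the code's block length'' is false: in $\mathbb{F}_2^m$ the number of cosets of any fixed subspace is at most $2^m$, and the number of all affine subspaces is at most $2^{O(m^2)}$, so with $|U|=2^m$ you get at best $\log|A|=O((\log|U|)^2)$, not $\log|A|=|U|^{\Theta(1)}$ as required for $\log|U|=O(\log\log|A|)$. The coset/affine-subspace route does not deliver the needed cardinality.

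The paper sidesteps this by a probabilistic construction rather than an algebraic one: take $U=\{1,\dots,n\}$, form $N=e^{n^{\eta}}$ random subsets by including each element independently with probability $n^{-2/3}$, and show that with positive probability every set has size $\ge n^{0.3}$ and every pair intersects in $\le n^{0.2}$ points, so $|a\cap b|/\min(|a|,|b|)\le n^{-0.1}<\epsilon$ for large $n$. This gives $\log|A|\ge n^{\eta}$ and $\log|U|=\log n=O(\log\log|A|)$ directly. Your ``verification tag'' aside could be made into a correct explicit alternative (e.g.\ let $a$ range over low-degree polynomials over $\mathbb{F}_q$, let $X_a$ output a random pair $(x,a(x))$, and test by evaluation; then $|U|=q^2$ and $|A|=q^{\Theta(q)}$), but that is not the argument you actually wrote down, and it is different from the paper's probabilistic route.
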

\begin{proof}
Part (i) was established in \cite{inv1} (Theorem 2.1 (ii)).
For Part (ii), let $U = \{1, \ldots, n\}$ and let $A$ be a collection of subsets of $U$ with the property that
for any two elements $a,a' \in A$:
\begin{equation}
\label{fracs}
\frac{|a \cap a' |}{\min \{|a|, |a'|\}} \leq \epsilon.
\end{equation}
Consider the following simply-parameterized random variable $X_a (a \in A)$ defined by the rule that $X_a$ selects one of the elements of $a$ uniformly at random. Note that $X_a$ takes values in the set $U$.  Consider the deterministic testing process $\psi$ defined by the rule:
$$\psi(a,u) =
\begin{cases}
{\rm true}, & \mbox{if  $u \in a$;}\\
{\rm false}, & \mbox{otherwise}.
\end{cases}$$
Clearly, $$\PP(\psi(a, X_a) = {\rm true}) = 1,$$
and Condition (\ref{fracs}) ensures that for $b \neq a$:
 $$\PP(\psi(a, X_b) = {\rm true}) \leq \epsilon,$$
and so $\psi$ has strong accuracy $1-\epsilon$.  Thus it remains to construct a family $A$ of subsets of $\{1, \ldots, n\}$ satisfying (\ref{fracs}) and of cardinality at least $e^{n^\eta}$ for some $\eta>0$ (since in that case $\log(|U|) = O(\log(\log(|A|)))$).

The existence of such a large collection can be established by using the probabilitistic method as follows.  Construct $N$ random subsets of $\{1, \ldots, n\}$ by the following process: for each element $i$ of $\{1, \ldots, n\}$ place $i$ in the set with probability $n^{-2/3}$; otherwise, leave that element out. Using standard results on the asymptotic distribution of sums of i.i.d. random variables,  the probability $p$ of the event that (i) all of the $N$ sets are of size at least $n^{0.3}$, and (ii) that all pairs of sets intersect in at most $n^{0.2}$ points satisfies (by the subadditivity of probability):
 $$p \geq 1-N\exp(-n^{c_1}) - \binom{N}{2}\exp(-n^{c_2}),$$for positive constants $c_1, c_2$. Now, for $N = e^{n^\eta}$ where $0<\eta<\min\{c_1, c_2\}$  it holds that $p>0$ for sufficiently large values of $n$. In  this case, a collection of sets must exist that satisfy Conditions (i) and (ii).  Finally, this collection will also satisfy (\ref{fracs}) provided $n$ is large enough that $\frac{n^{0.2}}{n^{0.3}} < \epsilon$. This completes the proof. \end{proof}

Given a reconstruction method $R$,  a canonical testing process $\psi_{R}$ is associated
with it, defined as follows:
$$\psi_{R}(a, {\bf u} ) = {\rm true}  \Leftrightarrow R({\bf u})=a.$$
for all $a \in A$ and ${\bf u} \in U^k$.
The following lemma follows easily from the definitions.
\begin{lemma}
\label{oneway}
Given a parameterized random variable $X_{(a, \theta)}$ and an associated reconstruction method $R$ with accuracy $1-\epsilon$, the associated testing process $\psi_{R}$ has reconstruction accuracy $1-\epsilon$.  \end{lemma}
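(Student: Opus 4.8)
The plan is to unwind the definitions directly; this is a statement that should fall out immediately once everything is written out, so the "proof" is really just a verification that the two accuracy conditions for a testing process are satisfied by $\psi_R$.

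First I would fix $a \in A$ and $\theta \in \Theta(a)$ and verify Inequality~(\ref{eps1}) for $\psi_R$. By the definition of the canonical testing process, $\psi_R(a, {\bf u}) = {\rm true}$ precisely when $R({\bf u}) = a$, so for any realization of the data,
\[
\PP(\psi_R(a, (X^1_{(a,\theta)}, \ldots, X^k_{(a,\theta)})) = {\rm true}) = \PP(R((X^1_{(a,\theta)}, \ldots, X^k_{(a,\theta)})) = a) = \rho^R_{(a,\theta)} \geq 1-\epsilon,
\]
where the last inequality is exactly the hypothesis that $R$ has reconstruction accuracy $1-\epsilon$. (Here I am using the standing assumption from the footnote that the reconstruction process and the data generation are stochastically independent, so that the randomness in $R$ does not interfere with this identity.)

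Next I would check the second condition: for $b \neq a$ and $\theta' \in \Theta(b)$ we need $\PP(\psi_R(a, \cdot) = {\rm false}) \geq 1-\epsilon$ on data generated by $(b,\theta')$. Again by the definition of $\psi_R$, the event $\{\psi_R(a, {\bf u}) = {\rm false}\}$ is the event $\{R({\bf u}) \neq a\}$, which contains the event $\{R({\bf u}) = b\}$. Hence
\[
\PP(\psi_R(a, (X^1_{(b,\theta')}, \ldots, X^k_{(b,\theta')})) = {\rm false}) \geq \PP(R((X^1_{(b,\theta')}, \ldots, X^k_{(b,\theta')})) = b) = \rho^R_{(b,\theta')} \geq 1-\epsilon,
\]
again using the reconstruction accuracy of $R$ applied to the parameter $b$. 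This gives both conditions, so $\psi_R$ is a testing process of accuracy $1-\epsilon$, which is what the lemma asserts (the word "reconstruction" in the statement should read "testing").

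There is essentially no obstacle here; the only thing to be a little careful about is the role of the independent internal randomness of $R$, and the fact that the second condition only needs a one-sided inequality ($\{R = b\} \subseteq \{R \neq a\}$ suffices — we do not need $R$ to output $b$ with high probability in order to conclude it does not output $a$). So the whole argument is a two-line chain of inequalities for each of the two clauses. I would simply write it out as above and note that the identifiability-style subtleties play no role at this level of generality.
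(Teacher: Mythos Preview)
Your proof is correct and matches the paper's approach: the paper simply states that the lemma ``follows easily from the definitions'' and gives no further argument, and your two-line verification of each accuracy condition is exactly the unwinding intended. Your observation that ``reconstruction accuracy'' in the statement should read ``testing accuracy'' is also on point.
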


In summary, it is clear that in general, `testing' can require considerably less information (sequence length) than `reconstructing.'  We now consider what happens in a specific setting that arises in computational evolutionary biology.

\section{Testing versus reconstruction in phylogenetics}
\label{phy}

A {\em phylogenetic ($X$--) tree} is a tree $\T$, whose leaf set $X$ is labelled and whose interior vertices are unlabeled and of degree at least $3$. If, in addition, every interior vertex of $\T$ has degree exactly $3$ then $\T$ is said to be {\em binary}.  Without loss of generality, we can usually take $X=\{1, \ldots, n\}$.

In this section, we specialize, letting $A= A_n$  be the set of binary phylogenetic trees on leaf set $\{1,\ldots, n\},$ and letting $U=U_n$ be the set of
site patterns on the leaf set $\{1, \ldots, n\}$ generated under some Markov process on the tree.  Let $k_t(n)$ denote the sequence length required to test a phylogenetic tree with high probability and let $k_r(n)$ be the sequence length required to reconstruct a tree with high probability  (under the same model).

For one class of models (finite-state Markov processes with an irreducible rate matrix), we will show (Theorem~\ref{finitestate}) that $k_t(n)$ grows at least logarithmically with $n$ (even in the simply-parametric setting, where each tree has a fixed set of edge lengths).   It had already been established earlier that $k_r(n)$ grows at least logarithmically for very general models of sequence evolution (and in some more restricted models and parameter sets, grows at least polynomially) \cite{logs3, logs}.  Thus, when $k_r(n)= O(\log n)$, we have $k_t(n)=\Theta(k_r(n))$.

However for a closely-related Markov process - the `random cluster model', which can be used to model rare genomic events, the situation is surprisingly different in one respect. Although reconstruction still requires at least logarithmic (and in a certain range polynomial) number of samples \cite{cluster}, testing with strong accuracy of $1-\epsilon$ can be achieved with $O(1)$ samples (Theorem~\ref{randyclust}). Thus, in this case we  have  $k_t(n) = o(k_r(n))$. Moreover this applies even in the simply-parameterized setting (where each tree has a fixed set of edge lengths).

We will now describe these results, beginning with the finite-state model.

\subsection{Testing for a finite-state Markov process requires at least $\log(n)$ length sequences}

Finite-state Markov processes on trees underlie many approaches in molecular phylogenetics (see, for example, \cite{fels}).   We provide a brief formal description; for more details, the reader may wish to consult \cite{fels} or \cite{sem}.

\begin{figure}[ht]
 \label{figure1}\begin{center}
\resizebox{12.5cm}{!}{
\input{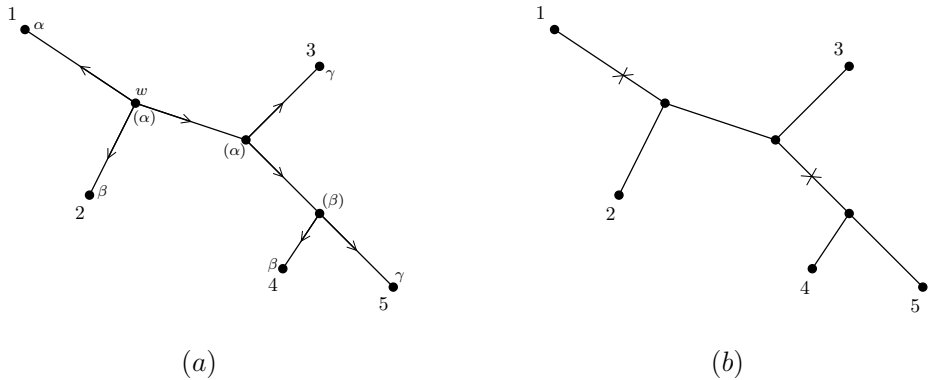}
}
\end{center}
\caption{(a): In a finite-state Markov process, a random state ($\alpha$) at some vertex ($w$) evolves along the arcs of the tree (directed away from $w$). This gives rise to states at the internal vertices of the tree (in this example $(\alpha), (\alpha), (\beta)$) and a site pattern at the leaves (in this case $\alpha,\beta,\gamma, \beta, \gamma$ where the ordering corresponds to leaf order $1, 2, 3, 4, 5$).
(b):  In the random cluster model each edge $e$ is independently cut (indicated by $\times$) with an associated probability $p(e)$. In this example two edges were cut resulting in the leaf partition (character) of $\{\{1\}, \{2,3\}, \{4,5\}\}$.}
\end{figure}

A finite-state Markov process is a continuous-time Markov process whose state space is some finite set; we will denote the rate (intensity) matrix of this process by $R$.  We assume that $R$ forms a reversible Markov process and we let $\pi$ denote the equilibrium distribution on the states (determined by $R$).

Now, suppose we have a phylogenetic $X$--tree $\T$ for which each edge $e$ has some strictly positive valued `length' ($l(e)$).   In this way, we can define a Markov process on $\T$  as follows ({\em c.f.} Fig. 1(a)).  To some vertex $w$, we assign states according to the distribution $\pi$. Then assign states to the remaining vertices of the tree by orienting the edges of $\T$ away from $w$; for each arc $(u,v)$ for which $u$ has been assigned a state $s$, assign to $v$ the state obtained by applying the continuous-time Markov process with initial state $s$ for duration $l(e)$.  Thus the transition matrix associated to edge $e$ is $\exp(Rl(e))$ and the joint probability distribution on the vertices of $\T$ is independent of the choice of the initial vertex $w$ (by the reversibility assumption).   Such a model induces  a marginal distribution on the set of site patterns  -- assignments of states to the elements of $X$ (the leaves of $\T$) -- and this constitutes a single sample of the process (the site pattern is a random variable parameterized by the pair ($\T, l)$, where $l$ assigns the lengths to the edges of $\T$).

The main result of this section is the following. Recall that a rate matrix is {\em irreducible} if the probability of a transition from any one state to any other state in time $\delta>0$ is strictly positive.

\begin{theorem}
\label{finitestate}
Suppose we have a finite-state Markov process, with an irreducible rate matrix, on binary phylogenetic trees with leaf set $\{1, \ldots, n\}$. Suppose that we generate $k$ i.i.d. site patterns.  Then any  testing procedure that has accuracy $> \frac{1}{2}$ requires $k$ to grow at least at the rate $\log(n)$,
even in the simply-parameterized setting where all edge lengths are equal to a fixed strictly positive value.
\end{theorem}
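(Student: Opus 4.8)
\medskip

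The plan is to reduce the testing problem to a two–point hypothesis test and then exhibit two distinct trees whose site‑pattern distributions are almost indistinguishable. Suppose $\psi$ is a testing procedure of accuracy $1-\epsilon$ with a fixed $\epsilon<\tfrac12$, using $k=k(n)$ i.i.d.\ site patterns, with every edge carrying the prescribed fixed length $l$ (an accuracy that is merely $>\tfrac12$ with no uniform gap cannot force any growth of $k$, so this is the meaningful reading). Fix any two distinct trees $T\neq T'$ in $A_n$. Feeding the pair $(a,b)=(T,T')$ into the two accuracy requirements of a testing process shows that $\psi(T,\cdot)$ returns ${\rm true}$ with probability $\geq 1-\epsilon$ on $k$ samples from $T$ and ${\rm false}$ with probability $\geq 1-\epsilon$ on $k$ samples from $T'$; writing $f({\bf u})=\PP(\psi(T,{\bf u})={\rm true})\in[0,1]$ this gives $\EE_{p_T^{\otimes k}}[f]-\EE_{p_{T'}^{\otimes k}}[f]\geq 1-2\epsilon$, and since the supremum of such a difference over all $[0,1]$‑valued $f$ is the total‑variation distance (half the $l_1$‑distance) between the two $k$‑fold product laws, $d_{\mathrm{TV}}(p_T^{\otimes k},p_{T'}^{\otimes k})\geq 1-2\epsilon$. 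Combining this with the elementary subadditivity bound $d_{\mathrm{TV}}(p_T^{\otimes k},p_{T'}^{\otimes k})\leq k\cdot d_{\mathrm{TV}}(p_T,p_{T'})$ (a union bound on an optimal coupling) yields, for \emph{every} pair of distinct trees,
$$ k \;\geq\; \frac{1-2\epsilon}{d_{\mathrm{TV}}(p_T,p_{T'})}. $$

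So it suffices to produce, for each $n$, one pair of distinct binary phylogenetic $X$–trees $T_n\neq T_n'$ on $\{1,\dots,n\}$, all edges of length $l$, with $d_{\mathrm{TV}}(p_{T_n},p_{T_n'})=O(1/\log n)$. Take $T_n$ to be a balanced binary phylogenetic tree of depth $\Theta(\log n)$ and let $T_n'$ be obtained from it by a single nearest‑neighbour interchange across its central edge $e$, so that the four subtrees incident to $e$ each have $\Theta(n)$ leaves and every leaf of $T_n$ lies at tree‑distance $\Omega(\log n)$ from the endpoints of $e$. These are genuinely distinct $X$–trees.

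To bound $d_{\mathrm{TV}}(p_{T_n},p_{T_n'})$, use that, since the rate matrix is irreducible and $l>0$, the transition matrix $\exp(Rl)$ has all eigenvalues other than $1$ of modulus at most some $\lambda<1$, so influence between vertex states decays geometrically in tree‑distance. Both $p_{T_n}$ and $p_{T_n'}$ are mixtures of product measures over the four subtrees meeting $e$, indexed by the quadruple of states at the four attachment vertices, and the two trees differ only in how that quadruple is generated by the local chain around $e$. Choosing $l$ in a regime where the chain on a balanced binary tree is non‑reconstructible — possible because the second eigenvalue of $\exp(Rl)$ tends to $0$ as $l\to\infty$ — the whole leaf–pattern of each subtree depends on its attachment state only up to total‑variation error $\lambda^{\Omega(\log n)}$, uniformly in that state. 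Hence both $p_{T_n}$ and $p_{T_n'}$ lie within $\lambda^{\Omega(\log n)}=n^{-\Omega(1)}$ of one and the same product measure (draw each attachment state independently from $\pi$), and therefore within $n^{-\Omega(1)}$ of each other. Substituting into the displayed inequality gives $k\geq (1-2\epsilon)\,n^{\Omega(1)}$, in particular $k=\Omega(\log n)$.

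The only genuinely delicate step is the last estimate: one must show that the \emph{entire} leaf–pattern of a large balanced subtree — not merely its pairwise correlations — is nearly independent of the single state at the subtree's root, with error geometrically small in the depth. This is a quantitative non‑reconstruction statement for the tree‑indexed Markov chain, and for a general irreducible finite‑state chain it requires either taking $l$ past the reconstruction threshold (as above) or a direct coupling argument tracking how a perturbation localized near $e$ attenuates on the way to the leaves; a convenient way to make it rigorous is an $L^2$ (tree‑polynomial / $\chi^2$) expansion of $p_{T_n}-p_{T_n'}$ about $\pi^{\otimes n}$, in which only subsets of leaves whose Steiner tree crosses $e$ contribute and each such Steiner tree has length $\Omega(\log n)$, so the resulting sum is geometrically small. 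Everything else — the reduction to a binary test, subadditivity of total variation, and the existence of an internal edge of a balanced tree far from all its leaves — is routine.
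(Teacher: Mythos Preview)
Your reduction to a two-point total-variation bound is correct as far as it goes (it is essentially Lemma~\ref{testconnect}(i) plus subadditivity), but the approach has a structural gap: it cannot deliver $k=\Omega(\log n)$ for \emph{every} fixed edge length $l>0$, which is what the theorem asserts and what the Remark following it needs (the comparison with the $O(\log n)$ reconstruction upper bound of \cite{logs3} is for \emph{short} edges). Your inequality $k\ge(1-2\epsilon)/d_{\mathrm{TV}}(p_T,p_{T'})$ forces $k$ to grow only if some pair of distinct trees has $d_{\mathrm{TV}}(p_T,p_{T'})\to 0$. In the reconstruction regime (small $l$) this does not happen: the ``teasing'' result of \cite{teasing} discussed in Section~\ref{conc} shows that, for the symmetric two-state process with substitution probabilities below the critical value, \emph{any} two distinct binary trees on $n$ leaves can be told apart from $O(1)$ samples, so $d_{\mathrm{TV}}(p_T,p_{T'})$ is bounded below uniformly in $n$ and in the pair $T,T'$. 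Your non-reconstruction argument therefore lives only on the other side of the threshold, and the suggested $\chi^2$ alternative hits the same wall: the expansion over leaf subsets whose Steiner tree crosses $e$ has exponentially many terms and fails to be $o(1)$ precisely when $l$ is below the reconstruction threshold.

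The paper sidesteps this obstruction by comparing the tree $a$ not to a single alternative but to a uniform \emph{mixture} $*$ over $m=\Theta(n)$ local perturbations of $a$ (Lemma~\ref{testconnect}(ii)). After conditioning on the internal-vertex sequences $C$ (which have the same law under $a$ and under every perturbation), the likelihood-ratio deviations $Z_b=1-Q_b(\xi)/Q_a(\xi)$ coming from the $m$ perturbations become \emph{independent}, mean-zero, and bounded by $\tau^k$ for a constant $\tau=\tau(R,l)$ supplied by irreducibility; a second-moment bound then gives $d^{(k)}(a,*)\le\tau^k/\sqrt{m}$, hence $k=\Omega(\log m)$ for any $l>0$. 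The $\log n$ comes from averaging $m$ perturbations, not from any single perturbation being close to $a$ in total variation --- and that is exactly the mechanism a two-point argument cannot supply.
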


{\bf Remark:}
Theorem~\ref{finitestate} should be viewed alongside the result of \cite{logs3}, which shows that tree reconstruction (under the 2-state symmetric Markov model) with high accuracy is possible for sequences of length order $\log(n)$, even when the edge lengths are not known but constrained to lie within
a fixed interval $[f,g]$ for any $f>0$ and when $g$ is sufficiently small.

To establish Theorem~\ref{finitestate}, we first require a general result.
\begin{lemma}
\label{testconnect}
Suppose a simply-parameterized random variable $X_{a}$ has a testing process $\psi$ with accuracy $1-\epsilon$ for $k$ samples. Then:
\begin{itemize}
\item[(i)]
$d^{(k)}(a,a') \geq 2(1-2\epsilon)$ for all $a, a' \in A$ with $a \neq a'$,
$$\mbox{ {\rm where} }  d^{(k)}(a,a') := \sum_{{\bf u} \in U^k}|p_a({\bf u}) - p_{a'}({\bf  u})|$$
\item[(ii)] Let $A'$ be a proper, nonempty subset of $A$, and let $a \in A-A'$. Consider the following random variable $X'$ that is simply parameterized by the set $\{a, *\}$,
and  defined as follows:  $X'_a= X_a$, and $X'_* = X_Y$ where $Y$ is selected
uniformly at random from the nonempty set $A'$. Then $X'$ has a reconstruction method with  accuracy $1-\epsilon$ for $k$ samples. In particular:
$$d^{(k)}(a, *) \geq 2(1-2\epsilon).$$
\end{itemize}
\end{lemma}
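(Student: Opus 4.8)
The proof reduces to one elementary fact about distinguishing a pair of distributions, which I would isolate and then apply twice. \emph{Core fact:} if $p,q$ are probability distributions on a finite set $V$ and $g\colon V\to[0,1]$ satisfies $\sum_{v}p(v)g(v)\ge 1-\epsilon$ and $\sum_{v}q(v)g(v)\le\epsilon$, then $\sum_{v}|p(v)-q(v)|\ge 2(1-2\epsilon)$. This is immediate: subtracting the two inequalities gives $\sum_{v}(p(v)-q(v))g(v)\ge 1-2\epsilon$, while $0\le g\le 1$ together with $\sum_{v}(p(v)-q(v))=0$ forces $\sum_{v}(p(v)-q(v))g(v)\le\sum_{v:\,p(v)>q(v)}(p(v)-q(v))=\tfrac12\sum_{v}|p(v)-q(v)|$.

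For part (i) I would fix $a\ne a'$ and apply the core fact with $V=U^{k}$, $p=p_{a}(\cdot)$, $q=p_{a'}(\cdot)$, and $g({\bf u})=\PP(\psi(a,{\bf u})={\rm true})$; here I use that the internal randomization of $\psi$ is independent of data generation, so that $\PP(\psi(a,(X_{a}^{1},\dots,X_{a}^{k}))={\rm true})=\sum_{{\bf u}}p_{a}({\bf u})g({\bf u})$, and similarly with $a'$ in place of $a$. The two accuracy conditions for $\psi$ say precisely that $\sum_{{\bf u}}p_{a}({\bf u})g({\bf u})\ge 1-\epsilon$ and $\sum_{{\bf u}}p_{a'}({\bf u})g({\bf u})=1-\PP(\psi(a,(X_{a'}^{1},\dots,X_{a'}^{k}))={\rm false})\le\epsilon$, so the core fact yields $d^{(k)}(a,a')\ge 2(1-2\epsilon)$.

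For part (ii) I would take the obvious reconstruction method built from $\psi$: set $R({\bf u})=a$ if $\psi(a,{\bf u})={\rm true}$ and $R({\bf u})=*$ otherwise (with the randomness of $\psi$ independent of everything else). Then $\rho_{a}^{R}=\PP(\psi(a,(X_{a}^{1},\dots,X_{a}^{k}))={\rm true})\ge 1-\epsilon$. For the parameter $*$, I would condition on the value of $Y$: since the $k$-sample law of $X'_{*}$ is the uniform mixture $\tfrac1{|A'|}\sum_{b\in A'}p_{b}(\cdot)$ on $U^{k}$ (draw $Y$ once, then $k$ i.i.d.\ samples from $X_{Y}$), one gets $\rho_{*}^{R}=\tfrac1{|A'|}\sum_{b\in A'}\PP(\psi(a,(X_{b}^{1},\dots,X_{b}^{k}))={\rm false})$, and because every $b\in A'$ differs from $a$ (as $a\in A-A'$), each term is at least $1-\epsilon$, hence so is the average. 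Thus $R$ has accuracy $1-\epsilon$ for $k$ samples. Finally, the displayed inequality $d^{(k)}(a,*)\ge 2(1-2\epsilon)$ follows either by applying the core fact once more, with $p=p_{a}$, $q=\tfrac1{|A'|}\sum_{b\in A'}p_{b}$, and $g({\bf u})=\PP(R({\bf u})=a)$, noting that $\sum_{{\bf u}}q({\bf u})g({\bf u})=1-\rho_{*}^{R}\le\epsilon$; or else by invoking Lemma~\ref{oneway} to turn $R$ into a testing process of accuracy $1-\epsilon$ and applying part (i) with the two-element parameter set $\{a,*\}$.

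There is no genuine difficulty here — this is just the standard observation that a low-error test between two hypotheses forces their total variation distance to be close to maximal. The only points needing care are bookkeeping: passing from the (possibly randomized) $\psi$ and $R$ to their acceptance probabilities $g({\bf u})\in[0,1]$ and invoking the stated independence of testing/reconstruction from data generation; and correctly identifying the $k$-sample distribution of $X'_{*}$ as the uniform mixture $\tfrac1{|A'|}\sum_{b\in A'}p_{b}$ over $U^{k}$, which is exactly what makes the averaging in the accuracy check for $*$ — and hence the final inequality — go through.
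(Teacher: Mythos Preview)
Your proof is correct and follows essentially the same route as the paper: for (i) you compare $\PP_a$ and $\PP_{a'}$ on the event ``$\psi(a,\cdot)=\text{true}$'' to force the total variation bound, and for (ii) you build the same reconstruction rule $R({\bf u})=a$ iff $\psi(a,{\bf u})=\text{true}$, average over $Y\in A'$, and then recover the displayed inequality via Lemma~\ref{oneway} and part (i). Your explicit handling of a possibly randomized $\psi$ through the acceptance function $g({\bf u})\in[0,1]$ is in fact a touch more careful than the paper's appeal to the event formulation of variational distance, but the arguments are otherwise the same.
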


\begin{proof}
{\em Part (i).} First observe that $d^{(k)}(a,a')$ is twice the variational distance between the probability distributions $p_a$ and $p_{a'}$ on $U^k$, i.e.: $$d^{(k)}(a,a') = 2\cdot \max_E|\PP_a(E)-\PP_{a'}(E)|,$$ where maximization is over all events $E$ on $U^k$.
For each $a,a' \in A$, let $E_{a,a'}$ be the event that $\psi(a, (X_{(a, \theta)}^1, \ldots X_{(a, \theta)}^k)) = {\rm true}.$ Then $E_{a,a'}$ has probability at least $1-\epsilon$ when $a= a'$ and probability at most $\epsilon$ when $a\neq a'$.  Consequently,
$d^{(k)}(a,a') \geq 2|\PP_a(E_{a,a'}) - \PP_{a'}(E_{a,a'})| \geq 2(1-2\epsilon)$.

{\em Part (ii).}  Let $R: U^k \rightarrow \{a, *\}$ be defined as follows:
$$R({\bf u}) =
\begin{cases}
a, & \mbox{if $\psi(a, {\bf u}) =$ true;}\\
*, & \mbox{if $\psi(a, {\bf u}) =$ false.}
\end{cases}$$
Then,
$\rho_a^R = \PP(\psi(a, (X_{(a, \theta)}^1, \ldots X_{(a, \theta)}^k)) = {\rm true})  \geq 1-\epsilon$.
Moreover:
$$\rho^R_*= 1- \PP(\psi(a, (X_{(Y, \theta)}^1, \ldots X_{(Y, \theta)}^k)) = {\rm true}),$$ and:
$$\PP(\psi(a, (X_{(Y, \theta)}^1, \ldots X_{(Y, \theta)}^k)) = {\rm true}) = \sum_{y \in A'}
 \PP(\psi(a, (X_{(y, \theta)}^1, \ldots X_{(y, \theta)}^k)) = {\rm true}) \cdot \frac{1}{|A'|}.$$
By assumption, each term in the sum is $\leq \epsilon$ and so $\PP(\psi(a, (X_{(Y, \theta)}^1, \ldots X_{(Y, \theta)}^k)) = {\rm true}) \leq \epsilon$.
Thus, $\rho_*^R \geq 1-\epsilon$, as required.
By Lemma~\ref{oneway} there is a testing procedure for $X'$ that has accuracy at least $1-\epsilon$ and
so the second claim in Part (ii) now follows by Part (i).
\end{proof}

\begin{figure}[ht]
 \label{figure2}\begin{center}
\resizebox{8cm}{!}{
\input{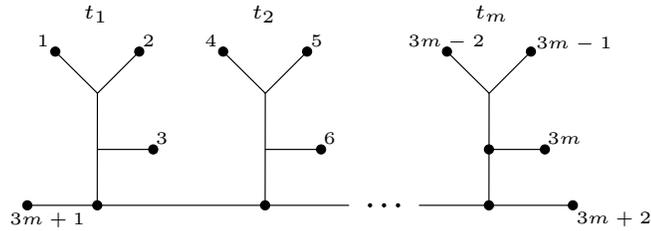}
}
\end{center}
\caption{The generating tree in the proof of Theorem~\ref{finitestate}.}
\end{figure}

\noindent{\em Proof of Theorem~\ref{finitestate}}
Let $a$ be the tree shown in Fig.~2 (for convenience, we will take all the edge lengths in this tree to be equal). With a view to applying Lemma~\ref{testconnect}, let $A'$ denote the set of $m$ trees consisting of precisely those trees obtained from $a$ by interchanging the leaf labels $3i-1$ and $3i$ for one value of $i=1, \ldots, m$ (and keeping the edge lengths fixed). Suppose there exists
a testing process of accuracy $1-\epsilon$ for phylogenetic trees. Then, by Lemma~\ref{testconnect}(ii)
we have:
\begin{equation}
\label{stepzero}
d^{(k)}(a, *) \geq 2(1-2\epsilon).
\end{equation}
Suppose we generate $k$ sites i.i.d. under parameter $a$ or $*$ (in the case of $*$, we select the random element of $A'$ and then generate $k$ sites i.i.d. using that element).
Let $C$ be the random vector variable that lists the sequences occurring at the  unlabelled internal vertices of the model tree $a$ in Fig. 2.
Note that $C$ has the same probability distribution for any element $b$ in $A'$ as it does for $a$, and so, in particular, we have:
$$\PP_a(C=c) = \PP_*(C=c)$$
for all choices of $c$.
Thus: $$d^{(k)}(a, *) = \sum_{{\bf u} \in U^k}|\PP_a({\bf u})-\PP_*({\bf u})| = \sum_{\bf u}|\sum_c(\PP_a({\bf u}|c)-\PP_*({\bf u}|c))\cdot \PP_a(C=c)|,$$
and so:
\begin{equation}
\label{firstep}
d^{(k)}(a, *) \leq \left(\max_c \sum_{\bf u}|\PP_a({\bf u}|c)-\PP_*({\bf u}|c))|\right)\cdot \sum_c \PP_a(C=c) = \max_c \sum_{\bf u}|\PP_a({\bf u}|c)-\PP_*({\bf u}|c))|.
\end{equation}
We will establish the following crucial inequality:
For any $c$:
\begin{equation}
\label{secondstep}
\sum_{{\bf u} \in U^k}|\PP_a({\bf u}|c)-\PP_*({\bf u}|c))| < \tau^k/\sqrt{m}
\end{equation}
 for a constant $\tau$.
Then, combining (\ref{secondstep}), (\ref{firstep}) and
(\ref{stepzero}) gives $\tau^k/\sqrt{m} > 2(1-2\epsilon)$ and so $k$ must grow at least logarithmically with $n$ ($=3m+2$), as claimed by the Theorem.

Thus, to establish the theorem, it suffices to justify Inequality (\ref{secondstep}).

For any $c$ maximizing $\sum_{\bf u}|\PP_a({\bf u}|c)-\PP_*({\bf u}|c))|$, let us denote
$\PP_x({\bf u}|c)$ by $Q_x({\bf u})$ for $x=a, *$ and $b \in A'$.   Then:
\begin{equation}
\label{hob1}
\sum_{\bf u}|Q_a({\bf u})-Q_*({\bf u})| = \sum_{\bf u}|Q_a({\bf u})-\frac{1}{m}\sum_{b \in A'} Q_b({\bf u})|.
\end{equation}
We can rewrite the expression on the right-hand side of (\ref{hob1}) as:
\begin{equation}
\label{fourthstep}
|\sum_{\bf u} Q_a({\bf u})(1- \frac{1}{m}\sum_{b \in A'}\frac{Q_b({\bf u})}{Q_a({\bf u})})| \leq \sum_{\bf u} Q_a({\bf u}) \cdot |\frac{1}{m}\sum_{b \in A'}(1-\frac{Q_b({\bf u})}{Q_a({\bf u})})|.
\end{equation}
In particular,  consider the following random variable: $$Z_b:= 1-\frac{Q_b(\xi)}{Q_a(\xi)}$$ for each $b \in A'$, where $\xi$ is the random element of $U^k$ generated by the probability distribution $Q_a$. Then
the expression on the right-hand side of (\ref{fourthstep})
can be rewritten as: $$\EE\left(|\frac{1}{m}\sum_{b \in A'} Z_b|\right),$$ where expectation is taken with respect to the
probability distribution $Q_a$ on $U^k$.
Now, for all $b \in A'$ we have:
$$\EE(Z_b) = \sum_{{\bf u} \in U^k} Q_a({\bf u})\cdot \left(1-\frac{Q_b({\bf u})}{Q_a({\bf u})}\right) = \sum_{{\bf u} \in U^k} (Q_a({\bf u})-Q_b({\bf u}))=1-1=0,$$
 and the $Z_b$ are independent random variables (since we have conditioned on a particular value $C=c$).  Thus, by Jensen's inequality:
\begin{equation}
\label{hob2}
\EE\left(|\frac{1}{m}\sum_{b \in A'} Z_b|\right)^2 \leq \EE\left((\frac{1}{m}\sum_{b \in A'} Z_b)^2\right) = \frac{1}{m^2}\sum_{b \in A'} \EE(Z_b^2).
\end{equation}

Now,  since $\EE(Z_b)=0$, we have $1 + \EE(Z_b^2) = \EE((Z_b+1)^2) = \EE(\frac{Q_b(\xi)^2}{Q_a(\xi)^2})$ and so:
\begin{equation}
\label{fifthstep}
\EE(Z_b^2) = -1 + \EE\left(\frac{Q_b(\xi)^2}{Q_a(\xi)^2}\right).
\end{equation}
Also, writing ${\bf u} = (u_1, \ldots, u_k)$ and $c= (c_1, \ldots, c_k)$, the irreducibility condition of the Markov process ensures that the following inequality holds:
 \begin{equation}
\label{tau}
\frac{\PP_b(u_i|c_i)}{\PP_a(u_i|c_i)} \leq \tau,
\end{equation}
 for some absolute constant $\tau$ dependent only on the rate matrix $R$ and the (equal) value of the common edge length.
Thus, by independence, $\frac{Q_b({\bf u})}{Q_a({\bf u})} \leq \tau^k$, and so (\ref{fifthstep}) gives:
$$\EE(Z_b^2) \leq -1 + \tau^{2k} < \tau^{2k}.$$
Consequently, by (\ref{hob2}):
$$\EE\left(|\frac{1}{m}\sum_{b \in A'} Z_b|\right) \leq \sqrt{ \frac{1}{m^2} |A'|\tau^{2k} } < \tau^k/\sqrt{m},$$
and so (by (\ref{hob1}) and (\ref{fourthstep})) we have verified Inequality (\ref{secondstep}), and thereby completed the proof.
\hfill $\Box$

\subsection{An O(1) test for the random cluster model}

In this section, a {\em character} (on $X$) denotes an arbitrary partition $\{\alpha_1, \ldots, \alpha_m\}$ of $X$ into any number of disjoint subsets.

In the {\em random cluster model}, one has a phylogenetic $X$--tree $\T$ and each edge $e$ has  an associated probability $p(e)$ that the edge of $\T$ is cut. These cuts are performed independently across the tree, resulting in a (generally disconnected) graph and the leaves in each connected component form the blocks of a resulting random character on $X$. Thus $\T$ along with the $p(e)$ values (called `substitution probabilities')  provide a well-defined probability distribution on characters on $X$ (see Fig. 1(b)).

This is the same distribution on characters as one obtains in the limit as $s \rightarrow \infty$ from a finite-state Markov process on $\T$ that has an $s \times s$ rate matrix in which all its off-diagonal entries are equal, and where one considers the character on $X$ whose blocks are the sets of leaves of the same state.  Thus we can view the random cluster model as a type  of infinite-state Markov process.  The model is relevant for describing evolution in settings where transitions generally lead to states that have not appeared elsewhere in the tree (such as with gene order re-arrangement, or other rare genomic events).

Given a character $\chi$ on $X$ and a phylogenetic $X$--tree $\T$, let $\T_{\alpha}$ denote the minimal subtree of $\T$ connecting the leaves of $\T$ in block $\alpha$. Then $\chi$ is said to be {\em homoplasy-free}  on $\T$  if the collection of trees $\T_\alpha: \alpha \in \chi$ is vertex-disjoint.
Given a sequence $\C= (\chi_1, \chi_2, \ldots, \chi_k)$ of characters, consider the following deterministic testing process $\psi_H$ on phylogenetic trees:

$$\psi_H(\C, \T) =
\begin{cases}
{\rm true}, & \mbox{if $\chi_i$ is homoplasy-free on $\T$ for $i=1, \ldots, k$;}\\
{\rm false}, & \mbox{otherwise.}
\end{cases}$$

 \begin{figure}[ht] \begin{center} \label{figure3}
\resizebox{8cm}{!}{
\input{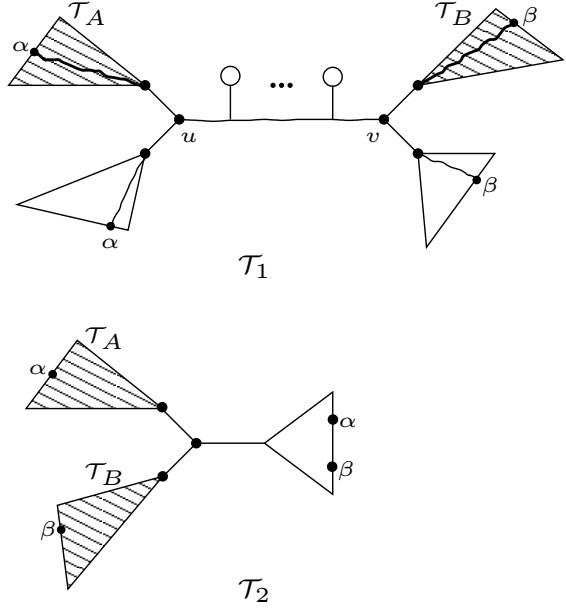}
}
\end{center}
\caption{Canonical representation of two different binary phylogenetic trees in the proof of Theorem~\ref{randyclust}}
\end{figure}

\begin{theorem}
\label{randyclust}
Under the random cluster model  on binary phylogenetic trees with leaf set $\{1, \ldots, n\}$, suppose that we generate $k$ i.i.d. characters, where the substitution probability $p(e)$  on any edge $e$ lies in some fixed interval $[a,b]$ where $0<a \leq b<\frac{1}{2}$.  Then $\psi_H$ is a testing process with
strong accuracy of $1-\epsilon$ whenever the number of characters is at least $k$, where:
$$k = \gamma^{-1}\cdot\log(\epsilon^{-1}),$$
and where $\gamma = a\cdot \frac{(1-2b)^4}{(1-b)^4}.$ This holds for all values of $n$.
\end{theorem}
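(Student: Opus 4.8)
The plan is to establish the two halves of ``strong accuracy $1-\epsilon$'' separately. The easy half — that the true tree always passes — is immediate: if $\chi$ is generated by the random cluster model on $\T$, its blocks are exactly the leaf sets of the connected components of the cut graph, so the subtrees $\T_\alpha$ ($\alpha\in\chi$) lie in distinct components and are vertex-disjoint; hence every generated character is homoplasy-free on $\T$ and $\psi_H(\C,\T)=\mathrm{true}$ with probability $1$. For the other half, fix binary trees $\T_1\neq\T_2$ with all substitution probabilities in $[a,b]$. Since the $k$ characters are i.i.d.,
$$\PP_{\T_1}\bigl(\psi_H(\C,\T_2)=\mathrm{true}\bigr)=q^k,\qquad q:=\PP_{\T_1}(\chi\text{ is homoplasy-free on }\T_2),$$
so it suffices to prove $q\le 1-\gamma$; then $q^k\le(1-\gamma)^k\le e^{-\gamma k}\le\epsilon$ once $k\ge\gamma^{-1}\log(\epsilon^{-1})$, uniformly in $n$ and in $\T_1,\T_2$. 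Thus it is enough to show $\PP_{\T_1}(\chi\text{ not homoplasy-free on }\T_2)\ge\gamma$.

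First I would reduce ``not homoplasy-free on $\T_2$'' to a combinatorial event: it suffices to produce an edge $e'$ of $\T_2$ and two distinct blocks of $\chi$ each having leaves on both sides of $e'$, since then the corresponding $\T_2$-subtrees both contain the endpoints of $e'$ and so are not vertex-disjoint. To locate $e'$: because $\T_1\neq\T_2$, pick an internal edge $e'$ of $\T_2$ whose split $S\mid\bar S$ is not a split of $\T_1$; in $\T_1$ the minimal subtrees $\T_S$ and $\T_{\bar S}$ are then not vertex-disjoint, and a degree count (an internal vertex of $\T_1$ lying in $\T_S$ has degree $\ge2$ there, likewise in $\T_{\bar S}$, but only degree $3$ in $\T_1$) shows their intersection is a subtree with at least one edge. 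Take two degree-one vertices $x,z$ of this intersection subtree and let $e_0=\{x,y\}$ be the first edge of the $x$--$z$ path inside it. A short case analysis at $x$ (where $e_0$ is the only incident edge lying in both $\T_S$ and $\T_{\bar S}$) shows that of the other two edges at $x$, one leads into a pendant subtree $\T_A$ of $\T_1$ with $\mathrm{leaves}(\T_A)\subseteq S$ and the other into a pendant subtree $\T_B$ with $\mathrm{leaves}(\T_B)\subseteq\bar S$; symmetrically at $z$, on the far side of $e_0$, giving pendant subtrees $\T_C\subseteq S$ and $\T_D\subseteq\bar S$. (This local picture is the canonical representation of Figure~3.)

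The quantitative input is the lemma: for any pendant subtree $T^\ast$ of $\T_1$ hanging off an edge $f=\{v,r\}$, the probability that $v$ is joined in the cut graph to some leaf of $T^\ast$ is at least $\tfrac{1-2b}{1-b}$. I would prove this by induction on $|T^\ast|$; the base case (a single leaf) gives $1-p(f)\ge 1-b\ge\tfrac{1-2b}{1-b}$, and the step, conditioning on $f$ and on the two child subtrees at $r$, reduces to the fixed point $t=(1-b)\bigl(1-(1-t)^2\bigr)$, whose relevant solution is $t=\tfrac{1-2b}{1-b}$ (using that $t\mapsto(1-b)(1-(1-t)^2)$ is increasing). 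Now consider the event that $e_0$ is cut and that, within the two sides of $e_0$, $x$ is joined to a leaf of $\T_A$ and to a leaf of $\T_B$ while $z$ is joined to a leaf of $\T_C$ and to a leaf of $\T_D$. These five events involve five pairwise disjoint edge sets ($\{e_0\}$ and, for each of $\T_A,\T_B,\T_C,\T_D$, its edges together with the single edge attaching it), hence are independent, so the event has probability at least $a\cdot\bigl(\tfrac{1-2b}{1-b}\bigr)^4=\gamma$; and on it $x$ and $z$ lie in distinct components (as $e_0$ is cut), each of which meets both $S$ and $\bar S$, i.e.\ straddles $e'$ in $\T_2$. This gives $\PP_{\T_1}(\chi\text{ not homoplasy-free on }\T_2)\ge\gamma$, hence the theorem.

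I expect the crux to be the structural step: extracting the configuration $(e_0;\T_A,\T_B,\T_C,\T_D)$ with its purity properties from an arbitrary pair $\T_1\neq\T_2$ at all tree sizes, and — just as importantly — getting the constant right. A naive ``$1-b$ per edge'' bound on connectivity would force $\gamma$ to shrink with $n$; it is precisely the sharp infimum $\tfrac{1-2b}{1-b}$ in the lemma that lets four connectivity events combine into the stated $\gamma$, while using a single cut of $e_0$ to split off both sides (rather than isolating each side separately) is what keeps the prefactor $a$ rather than $a^2$.
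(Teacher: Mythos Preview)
Your argument is correct and yields the same constant $\gamma$, but the structural step you use to locate the ``bad'' configuration is genuinely different from the paper's. The paper proves a separate combinatorial lemma (by induction on $|X|$, peeling off shared cherries) that produces two pendant subtrees $\T_A,\T_B$ common to \emph{both} $\T_1$ and $\T_2$, adjacent to a single vertex $w$ in $\T_2$ but attached at distinct vertices $u\neq v$ in $\T_1$; the five events are then ``$u,v$ disconnected'' together with four connectivity events from $u$ (into $\T_A$ and the third branch at $u$) and from $v$ (into $\T_B$ and the third branch at $v$), and homoplasy is witnessed at the vertex $w$ of $\T_2$. You instead pick an internal edge $e'$ of $\T_2$ whose split $S\mid\bar S$ is absent from $\T_1$, argue by a degree count that $\T_S\cap\T_{\bar S}$ inside $\T_1$ contains an edge, and read off four monochromatic pendant subtrees at two leaves $x,z$ of that intersection; homoplasy is then witnessed at the edge $e'$ of $\T_2$. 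Your route avoids the inductive lemma entirely and is more self-contained (you also supply the inductive proof of the $(1-2b)/(1-b)$ connectivity bound, which the paper imports from \cite{cluster}); the paper's route gives a somewhat cleaner picture in $\T_2$ because $\T_A,\T_B$ are literally subtrees there. Either way the probabilistic core --- one cut event of probability $\ge a$ and four independent connectivity events each of probability $\ge(1-2b)/(1-b)$ --- is identical, which is why the constants agree.
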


To prove Theorem~{\ref{randyclust} we first require a combinatorial lemma.
\begin{lemma}
\label{decomplem}
Suppose $\T_1, \T_2$ are two binary phylogenetic $X$--trees, with $\T_1 \neq \T_2$, and $|X|>3$.
There exist induced rooted phylogenetic subtrees $\T_A$, $\T_B$, on leaf sets $A$ and $B$, respectively, where $A, B$ are disjoint, nonempty subsets of $X$, such that:
\begin{itemize}
\item[(i)] $\T_A$ and $\T_B$ are present as pendant subtrees in $\T_1$ and $\T_2$; and
\item[(ii)]  the root of $\T_A$  and of $\T_B$ are adjacent to a common vertex in $\T_2$ but not in $\T_1$.
\end{itemize}
\end{lemma}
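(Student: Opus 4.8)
The plan is to collapse the parts on which $\T_1$ and $\T_2$ already agree, reducing to a pair of distinct binary trees on a smaller leaf set, and then to take for $\{A,B\}$ a cherry of the reduced $\T_2$ that is not a cherry of the reduced $\T_1$.

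Call a nonempty proper subset $Y$ of $X$ a \emph{common pendant subtree} if $Y$ is displayed as a pendant subtree in both $\T_1$ and $\T_2$ and the two pendant subtrees on $Y$ are isomorphic as rooted phylogenetic trees on $Y$; write $\T_Y$ for this subtree, with root the vertex of $\T_i$ at which it hangs. Every singleton is a common pendant subtree, and if two common pendant subtrees intersect then, being clusters of $\T_1$, one contains the other. Hence the inclusion-maximal common pendant subtrees are pairwise disjoint and cover $X$; let $\mathcal P=\{A_1,\dots,A_t\}$ be this partition. Form $\bar\T_i$ from $\T_i$ by replacing each pendant subtree $\T_{A_j}$ by a single new leaf $z_j$. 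Since each $A_j$ induces the \emph{same} subtree in both trees, $\bar\T_1=\bar\T_2$ would force $\T_1=\T_2$; hence $\bar\T_1\neq\bar\T_2$, and since two distinct binary phylogenetic trees have at least four leaves, $t\geq 4$.

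Next I would show that $\bar\T_1$ and $\bar\T_2$ share no cherry. If $\{z_i,z_j\}$ were a cherry of both, then un-collapsing shows that $A_i\cup A_j$ is a cluster of $\T_1$ and of $\T_2$ whose induced pendant subtree is, in both trees, the rooted tree obtained by joining $\T_{A_i}$ and $\T_{A_j}$ at a new common root; these agree because $\T_{A_i}$ and $\T_{A_j}$ do. Thus $A_i\cup A_j$ would be a common pendant subtree properly containing $A_i$, contradicting maximality. Since $t\geq 4$, the binary tree $\bar\T_2$ has a cherry $\{z_i,z_j\}$, and by the claim it is not a cherry of $\bar\T_1$. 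Put $A=A_i$, $B=A_j$, $\T_A=\T_{A_i}$, $\T_B=\T_{A_j}$: these are disjoint nonempty common pendant subtrees, which gives (i). For (ii), un-collapsing the cherry $\{z_i,z_j\}$ of $\bar\T_2$ makes the roots of $\T_A$ and $\T_B$ adjacent to the (surviving, still internal) vertex that was their common neighbour in $\bar\T_2$, so they share a neighbour in $\T_2$; conversely, if the roots of $\T_A$ and $\T_B$ were adjacent to a common vertex $p$ in $\T_1$, then $p$ would lie outside $\T_{A}$ and outside $\T_{B}$ and hence survive the collapse, so $z_i$ and $z_j$ would both be adjacent to $p$ in $\bar\T_1$, i.e. $\{z_i,z_j\}$ would be a cherry of $\bar\T_1$ --- a contradiction.

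The one thing that needs care, rather than depth, is the bookkeeping for the collapse: that a cluster common to $\T_1$ and $\T_2$ with isomorphic induced subtrees can be contracted to a leaf consistently in both trees, that contraction does not affect the part of the tree outside the contracted block, and that an internal vertex of $\T_i$ outside every block keeps degree $3$ (so $\bar\T_i$ is again a binary phylogenetic tree). Once this is set up the cherry transfer in the last paragraph is immediate, and no case analysis on the shape of the trees is needed.
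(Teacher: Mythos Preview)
Your argument is correct and is in spirit the same as the paper's, though the execution differs. The paper proceeds by induction on $|X|$: if $\T_2$ has a cherry not in $\T_1$, take that cherry as $\{A,B\}$; otherwise collapse a shared cherry to a single leaf in both trees and recurse. You instead perform the collapse in one shot, contracting every maximal common pendant subtree simultaneously, and then observe that the two reduced trees can share no cherry at all (by maximality), so any cherry of $\bar\T_2$ works. Your version avoids the inductive bookkeeping and yields a slightly stronger intermediate statement (the reduced trees are cherry-disjoint, not merely distinct), at the cost of the small amount of care you flag about the contraction preserving the binary structure outside the blocks; the paper's induction sidesteps that by only ever collapsing a two-leaf cherry. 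Either route establishes the lemma with the same underlying idea.
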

\begin{proof}
The proof is by induction on $|X|$.  For $|X|=4$, the result is easily seen to hold. Suppose the lemma holds for $|X|=n-1$ where $n\geq 5$, and
that $|X|=n$. If $\T_2$ has a cherry (a pair of elements $\{x,y\}$ of $X$ that label leaves that are adjacent to a common vertex) that is not also a cherry of $\T_1$, then we can take $A=\{x\}, B=\{y\}$ and the claim in the theorem holds.  Otherwise, every cherry of $\T_2$ is a cherry of $\T_1$ and since $\T_2$ has at least one cherry ( say $\{x,y\}$ ) we may replace
$\T_1, \T_2$ with the pair of trees $\T_1', \T_2'$ obtained by deleting  the leaves labeled $\{x,y\}$ and their incident edges from each tree, and assigning each newly created leaf vertex the label $v_{x,y}$. Note that
$\T_1', \T_2'$ are binary phylogenetic trees, with a leaf label set $X-\{x,y\} \cup\{v_{x,y}\}$, and that
$\T_1' \neq \T_2'$ (otherwise it is easily seen that $\T_1=\T_2$). Thus we may apply the induction hypothesis to the pair
$\T_1', \T_2'$.  Given the sets $A$, $B$ for this pair that meet the requirements stated in the lemma, we then
replace any occurrence of $v_{x,y}$ in $A$ or $B$
by the elements $x,y$ -- the resulting pair of sets now satisfies the requirements stated in the lemma. This completes the proof.
\end{proof}

\noindent{\em Proof of Theorem~\ref{randyclust}}
Suppose that $\T_1$ is the binary phylogenetic $X$--tree that generates the characters. Then $\psi_H(\F, \T_1) = `{\rm true}$' with probability $1$, since the event $\chi$ is homoplasy-free on  $\T_1$  has probability $1$ for any character $\chi$ that evolves on $\T_1$ under the random cluster model. Now, suppose that $\T_2$ is a binary phylogenetic $X$--tree that is different from  $\T_1$. By Lemma~\ref{decomplem}, $\T_2$ and $\T_1$ both share the same pair of  pendant subtrees $\T_A$ and $\T_B$ for which the roots are adjacent in $\T_2$ but not in $\T_1$, as illustrated in Fig. 3. Now consider the evolution of one of the characters $\chi_i$ under the random cluster model on $\T_1$. Let $\alpha, \beta$ denote, respectively, the blocks of the character at vertices $u, v$ of $\T_1$, and consider the conjunctive event $E = \bigcap_{i=1}^5E_i$ in which:
\begin{itemize}
\item[($E_1$)] $\alpha \neq \beta$;
\item[($E_2$)] at least one leaf of $\T_A$ is present in $\alpha$;
\item[($E_3$)] at least one leaf of the other subtree of $\T_1$, which is incident with $u$ and does not contain $v$, is present in  $\alpha$;
\item[($E_4$)]  at least one leaf of $\T_B$ is present in $\beta$; and
\item[($E_5$)] at least one leaf of the other subtree of $\T_1$, which is incident with $v$ and does not contain $u$, is present in $\beta$.
\end{itemize}
Under the random cluster model, these five events are independent (by the assumption that the cuts on edges are made independently) and so:
$$\PP(E) = \PP(\bigcap_{i=1}^5E_i) = \prod_{i=1}^5\PP(E_i) \geq \gamma,$$
since $\PP(E_1) \geq a$, and, by Lemma 2.1 of \cite{cluster},
$\PP(E_i) \geq (1-2b)/(1-b)$ for $i \in \{2,\ldots, 5\}$.
Now, $\psi_H(\F, \T_2) = {\rm `false}$' whenever event $E$ occurs.  If we evolve $k$ independent characters under the assumptions stated in the Theorem, then the probability that $E$ occurs at least once is at least $1-(1-\gamma)^k$, and this probability is at least $1-\epsilon$ when $k \geq \gamma^{-1}\log(\epsilon^{-1})$, by virtue of the inequality $1 - (1-x)^y \geq 1- e^{-xy}$. This completes the proof.
 \hfill $\Box$

\section{Concluding comments}
\label{conc}

The reader may be curious as to where our proof for the $\log(n)$ lower bound on sequence length for testing under the finite-state model breaks down for the random cluster model.  The crucial distinction is that the random cluster model fails to satisfy condition (\ref{tau}) required in the proof for Theorem~\ref{finitestate}.  That is, in the random cluster model,  some characters have a positive probability on some trees but have zero probability on other trees.  Indeed it has been shown that, for any binary phylogenetic tree $\T$ with $n$ leaves, there is a set of just {\em four} characters such that $\T$ is the only tree for which these characters have strictly positive probability \cite{hube}.  Thus, in contrast to finite-state models, under the random cluster model each tree can be reconstructed from $O(1)$ characters (using, say maximum likelihood estimation), provided these characters are carefully selected; if the characters evolve under the random cluster model then, as mentioned earlier, the number required number of characters for accurate tree reconstruction grows at the rate of (at least) $\log(n)$ \cite{cluster}.

Note also that Theorem~\ref{randyclust} can be extended to  a setting in which the substitution probabilities vary from character to character, provided they all lie in some fixed interval $[a,b]$ where $0<a \leq b<\frac{1}{2}$.  If we generate $k$ characters independently (but not  necessarily identically) in this more general setting, testing the true tree using $\psi_H$ will return `true' with probability  $1$, while testing any other tree will return `false' with probability at least $1-\epsilon$ provided $k$ satisfies the lower bound described in Theorem~\ref{randyclust}.

Although testing for the  finite-state Markov model can require the same $\Omega(\log(n))$ growth in sequence length required for reconstructing, there is a related task where $O(1)$ sequence length suffices. This is for {\em teasing a tree}, where one is given sequences of length $k$ and a set of two trees -- one of which is the tree that generated the data, and one is asked to identify which of the two trees generated the data. For the symmetric $2$--state Markov process and under suitable restrictions on the substitution probabilities, it was shown in \cite{teasing} that sequences of length $O(1)$ (i.e. independent of $n$) suffice to correctly solve (with high probability) the teasing problem on binary phylogenetic trees with $n$ leaves.

Finally, we have considered reconstruction only in the part of the parameter range (on the substitution probabilities on the edges of the tree) where reconstruction requires logarithmic length sequences. Outside of this region, it is known that polynomial-length sequences can be required, both for the finite state Markov model \cite{logs3} and for the random cluster model \cite{cluster}. It may  be of interest in future work to determine the sequence length required for testing in these portions of parameter space.
\newpage

\end{document}